\documentclass[12pt]{article}

\usepackage[english]{babel}
\usepackage{amsmath,amsthm}
\usepackage{amsfonts}
\usepackage{graphicx}
\usepackage{epsfig}
\usepackage{epstopdf}
\usepackage{url}
\usepackage{amssymb}
\usepackage{color}
\usepackage[%
breaklinks%
,colorlinks%
,linkcolor=red
,anchorcolor=blue%
,pagecolor=blue%
,citecolor=blue%
,bookmarks=false%
]{hyperref}

\newtheorem{thm}{Theorem}[section]

\newtheorem{lem}[thm]{Lemma}
\newtheorem{prop}[thm]{Proposition}
\theoremstyle{Definition}

\newtheorem{rem}{Remark}
\newtheorem{alg}{Algorithm}
\newtheorem{exam}{Example}
\numberwithin{equation}{section}

\def\P{{\mathbf{P}}}
\def\L{{\mathcal{L}}}

\def\X{{\mathbf{X}}}

\def\C{{\mathcal{C}}}
\def\S{{\mathcal{S}}}
\def\N{{\mathbf{N}}}
\def\num{{\mathrm{num}}}
\def\Res{{\mathrm{Res}}}
\def\para{\vspace{4 mm}}

\oddsidemargin -1mm \evensidemargin -1mm \topmargin 0mm \headheight
10pt \headsep 1.5cm \footskip 30pt \textheight 215mm \textwidth
155mm \columnsep 10pt \columnseprule 0pt \topsep 1pt plus 2pt minus
4pt
\itemsep 1pt plus 2pt minus 1pt
\marginparwidth 0pt \oddsidemargin .5cm \evensidemargin .5cm

\marginparsep 0pt \topmargin -.5cm \sloppy

\begin{document}

\title{Determination and (re)parametrization of rational developable surfaces}%

\author{Sonia P\'erez-D\'{\i}az  \\
Dpto de F\'{\i}sica y Matem\'aticas \\
        Universidad de Alcal\'a \\
      E-28871 Madrid, Spain  \\
sonia.perez@uah.es
\and Li-Yong Shen\\
School of Mathematical Sciences  \\
      University of CAS \\
        Beijing, China \\lyshen@ucas.ac.cn
      }
\date{}          
\maketitle      

\begin{abstract}The developable surface is an important surface in computer aided design,  geometric modeling and industrial manufactory. It is often given in the standard parametric form, but it can also be in the implicit form which is commonly used in algebraic geometry. Not all algebraic developable surfaces have rational parametrizations. In this paper, we
focus on the rational developable surfaces. For a given algebraic surface, we first determine
whether it is developable by geometric inspection, and we give a rational proper parametrization for the affirmative case.  For a rational parametric surface, we can also determine the developability and give a proper reparametrization for the developable surface.

\end{abstract}

{\bf Keywords:} rational developable surface, parametrization, reparametrization

\section{Introduction}
A developable surface can be constructed by bending a planar region at every point.
It is a commonly used surface in computer aided design and geometric modeling~\cite{ liu06,li13,potman99,sun96}.
Developable surfaces have zero Gaussian curvature and they are a subset of ruled surfaces. In general design, the developable surface is often proposed as a parametric form. In recent years, people challenge to geometrically design with algebraic surfaces since they have more geometric features and topologies than those of the parametric surfaces (see \cite{handbook, turk02}).
In this situation, a natural problem is to determine the type of algebraic modeling surfaces. For expected cases, some surfaces can be commonly used surfaces, for example, developable surfaces. As a successive problem, we need to find a rational parametrization of the determined surface, since the parametrizations are better representations for manufactural control and computer display.

The two problems are both difficult for general surfaces, particularly in computation. Since the algebraic surfaces are basic objects in algebraic geometry, there were some classical results associated to these two problems. Let $\cal S$ be an algebraic surface. If $\cal S$  is a rational surface then $P_n=q=0$ for all $n$, and conversely, any surface with
$q = P_2 = 0$ is a rational surface, where $P_n$ and $q$ are the \emph{plurigenus} and the \emph{irregularity} of $\cal S$, respectively. This is called Castelnuovo's rationality criterion (see~\cite{beauville96}, V.1).
If $\cal S$  is a ruled surface then $P_n=0$ for all $n$,  and conversely, any surface with $P_4 = P_6 = 0$ (or $P_{12}=0$) is a ruled surface. This is called the criterion of ruled surfaces (see~\cite{beauville96}, VI.18).

\para

The above results gave important theoretical effects, but there were lack of practical method for real computation. In fact, the plurigenus and irregularity are difficult to compute.
Therefore, for a general implicit curve and surface, to propose a parametrization algorithm is still an open problem~\cite{handbook}.
To meet the practical needs, people tried to design some parametrization algorithm for some special surfaces which are commonly used.
Sederberg and Snively~\cite{seds87} proposed four methods of parametrization of
cubic algebraic surfaces. Sederberg~\cite{sed90b} and Bajaj et al.~\cite{bajaj98} expanded this method.
In~\cite{handbook02}, a method to parameterize a quadric is given using
a stereographic projection; Berry et al.~\cite{berry01} tried to unify the implicitization and parametrization of a nonsingular cubic surface with Hilbert-Burch matrices.
Recently, Chen et al.~\cite{chen06} presented a method to deal with the implicitization
and parametrization of quadratic and cubic surfaces by the $\mu$-basis which is a developing method.  In~\cite{sonia13}, P\'erez-D\'iaz and Shen characterized the rational ruled surfaces using the reduced standard form. 
These methods were designed for some special surfaces. For a general given surface,   Schicho~\cite{schicho97} gave well analysis in parametrization problem. He provided more contributions on theoretical analysis than practical computation, since the problem is quite difficult for general situations.

\para

There were also some numerical mesh parametrization methods designed for the industrial manufactory~\cite{mf05}. One  difficult problem in the numerical methods is to set the values of parameter for the points in an implicit surface. This is the main reason that people can only get an approximate parametrization using the numerical methods.
Since the numerical approximate method may lose some intrinsic properties of the surfaces, we prefer to find the symbolic parametrization method in this paper for a typical surface named as the developable surface.

\para

As mentioned above, the developable surface is an important modeling surface. But there had few papers that discussed the parametrization of an algebraic developable surface. This motives us to focus on the problem in this paper. In the geometric investigation,
a developable surface must be either a cylindrical surface, a conical surface, or a tangential surface of a space curve. We then reduce the problem to determine and parameterize these three special surfaces.

\para

The paper is organized as follows. Some necessary notations and preliminary results are proposed in Section 2. The algebraic rational developable surfaces are characterized in Section~3, and a rational proper parametrization is computed. Examples are given for some typical surfaces. In Section~4, we focus on the parametric surface and the reparametrization problem. The examples are also presented. Finally, we concluded the paper in Section~5.

\section{Preliminaries}

Let $\mathbb{L}[t]$ be the polynomial ring over the subfield $\mathbb{L}$ of an algebrically closed field of characteristic zero $\mathbb{K}$, and $\mathbb{L}(t)$ be the field of rational functions over~$\mathbb{L}$.

\para

A \emph{ruled surface} is defined by  one parameter family of straight lines moving along a curve. The curve is called the \emph{directrix} and the straight lines are called \emph{rulings} or \emph{generators}. A \emph{developable surface} is a ruled surface with zero Gaussian curvature (see~\cite{Carmo, Spivak M79}). If the rulings all pass through one point which called \emph{apex}, the surface is a \emph{conical surface}.  If the rulings of a developable surface are parallel to the same straight line, the surface is a \emph{cylindrical surface}. In the remaining cases the developable surface is the \emph{tangential surface} which is defined by the tangents to a certain space curve.  We also call it the tangential surface of the space curve. The space curve is called \emph{cuspidal edge} of the tangential surface.

Although the developable surfaces are often given in parametric form, not all developable surfaces have rational parametrizations.

 A proper parametrization of a rational ruled surface  in standard form is given by
\begin{equation}\label{Eq1}
\P(s,t) = \P_0(t)+ s \P_1(t).
\end{equation}
where  $\P_i(t)\in \mathbb{R}(t)^3,\,i=1,2,$  and $\P_1\not=(0,0,0)$.
The rational developable surface has three forms by the following
lemma presented in~\cite{Spivak M79}.

\para

\begin{lem} \label{lem_develop} A ruled surface  of the form~\eqref{Eq1} is a developable
surface if and only if $\P_0\times\P_1'\cdot\P_1=0$. In addition, a rational developable surface can only be one of the following cases:
\begin{itemize}
  \item If $\P_0(t)$ is a
  constant vector, then $\P(s,t)$ defines a conical surface.
  \item If $\P_1(t)$ is  a
  constant vector, then $\P(s,t)$ defines a cylindrical surface.
    \item If $\P_1(t)=\P_0'(t)$, then $\P(s,t)$ defines a tangential surface.
\end{itemize}
\end{lem}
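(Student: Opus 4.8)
The plan is to split the statement into two parts: first the characterization of developability via the scalar triple product $\P_0\times\P_1'\cdot\P_1=0$, and then the trichotomy into conical, cylindrical, and tangential surfaces. For the first part I would compute the Gaussian curvature $K$ of the ruled surface $\P(s,t)=\P_0(t)+s\,\P_1(t)$ directly from the first and second fundamental forms. The standard fact (which one can quote from do Carmo or Spivak, as the excerpt does) is that for a ruled surface $K=-\,(E G-F^2)^{-2}\,\bigl(\det(\P_0'+s\P_1',\,\P_1',\,\P_1)\bigr)^2$, so $K\equiv 0$ iff $\det(\P_0'+s\P_1',\P_1',\P_1)=0$ identically in $s$. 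Since this determinant is affine-linear in $s$ and the $s$-coefficient is $\det(\P_1',\P_1',\P_1)=0$ automatically, vanishing for all $s$ is equivalent to vanishing of the constant term, i.e. $\det(\P_0',\P_1',\P_1)=\P_0'\times\P_1'\cdot\P_1=0$. (I note a small discrepancy: the lemma writes $\P_0\times\P_1'\cdot\P_1$; I would check whether the intended condition is $\P_0'\times\P_1'\cdot\P_1=0$ and reconcile the notation — this is the kind of typo-level issue that needs care but not deep work.)

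For the trichotomy, the idea is that a developable surface is the envelope of its one-parameter family of tangent planes, and the structure of that family forces one of three behaviours. Concretely, I would argue as follows. The tangent plane of $\P$ along the ruling at parameter $t$ has normal direction $\n(t)=\P_1(t)\times\P_0'(t)$ (independent of $s$ precisely because the triple-product condition holds — this is worth spelling out, since constancy of the tangent plane along a ruling is the geometric meaning of developability). The regression/singular structure is then controlled by the point on each ruling where the map degenerates; solving $\P_s\times\P_t=\mathbf{0}$ in $s$ gives, generically, a well-defined curve $\mathbf{c}(t)=\P_0(t)+s(t)\P_1(t)$ — the cuspidal edge — whose tangent is parallel to $\P_1(t)$. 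One then distinguishes cases: if this degeneracy locus is a single point for all $t$, the rulings are concurrent and the surface is conical; if no finite degeneracy point exists (the coefficient of $s$ in the relevant equation vanishes), then $\P_1'$ is parallel to $\P_1$, forcing $\P_1$ to have constant direction and the surface to be cylindrical; otherwise $\mathbf{c}(t)$ is a genuine curve with $\mathbf{c}'\parallel\P_1$, and after reparametrizing the ruling coordinate we may take $\P_1=\P_0'$, giving a tangential surface. I would also remark that these three forms are exactly the normal forms one can reach after allowing reparametrizations $s\mapsto\alpha(t)s+\beta(t)$ of the ruling coordinate and a change of directrix $\P_0\mapsto\P_0+\gamma(t)\P_1$, so the lemma is really the claim that every developable can be brought to one of these three shapes.

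The main obstacle I anticipate is the case analysis in the second part being carried out cleanly over the field $\mathbb{L}(t)$ rather than pointwise: statements like ``$\P_1'$ is parallel to $\P_1$ implies $\P_1$ has constant direction'' require the observation that $\P_1/\|\P_1\|$ is not available rationally, so one instead writes $\P_1'=\lambda(t)\P_1$ with $\lambda\in\mathbb{L}(t)$ and integrates — but the conclusion ``constant direction'' must be phrased projectively (the cross products $\P_1\times\P_1'$ vanish, hence all $\P_1(t)$ are scalar multiples of a fixed vector). Similarly, reducing to $\P_1=\P_0'$ in the tangential case needs the freedom to replace $\P_0$ by the cuspidal edge $\mathbf{c}(t)$ and then rescale; one must check this replacement stays within rational functions and does not change the surface. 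The fundamental-form computation and the claim about the $s$-independence of the normal are routine once set up, so I would not belabour them; the delicate point is organizing the trichotomy so that the three cases are genuinely exhaustive and handling the boundary between them (e.g. $\P_1$ constant is a degenerate sub-case that could also be read as conical at infinity).
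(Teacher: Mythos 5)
Your argument is essentially correct, but note that the paper does not actually prove this lemma: it is quoted as a known result from Spivak (and do Carmo), so there is no internal proof to compare against. Your sketch is precisely the standard differential-geometric argument from those references. The first part is fine: for $\P(s,t)=\P_0(t)+s\P_1(t)$ the only nonzero contribution to $LN-M^2$ is $-M^2$ with $M$ proportional to $\det(\P_1',\P_0'+s\P_1',\P_1)=\det(\P_1',\P_0',\P_1)$, so $K\equiv 0$ iff the triple product vanishes; and you are right that the paper's condition $\P_0\times\P_1'\cdot\P_1=0$ is a typo for $\P_0'\times\P_1'\cdot\P_1=0$ (the stated form is not even translation-invariant). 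For the trichotomy, your case split via the degeneracy locus (rulings concurrent at a point; $\P_1\times\P_1'\equiv 0$ giving constant ruling direction after rescaling $s$; otherwise a genuine edge of regression) is the right organization, and your remarks about working over $\mathbb{L}(t)$ and phrasing "constant direction" projectively are exactly the care the rational setting requires. The one step you assert rather than derive is the heart of the tangential case: that the regression curve $\mathbf{c}(t)=\P_0(t)+s(t)\P_1(t)$ satisfies $\mathbf{c}'(t)\parallel\P_1(t)$, which is what lets you replace the directrix by $\mathbf{c}$ and rescale $s$ to reach the normal form $\P_1=\P_0'$; this follows by differentiating $\mathbf{c}$ and using the coplanarity of $\P_0',\P_1,\P_1'$ together with the defining equation of $s(t)$, and should be written out (also checking that $s(t)$ is a rational function, which it is, being a ratio of components of $\P_1\times\P_0'$ and $\P_1\times\P_1'$). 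With that verification, and the harmless exclusion of the planar/degenerate cases, your proof is complete and is in substance the proof the paper delegates to the cited texts.
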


\para

For a tangential surface $\S$ defined by the parametrization $\P(s,t) = \P_0(t)+ s \P_0'(t)$, the cuspidal edge $\P_0(t)$ defines a singular curve of $\S$. Since one can find that $\P(s,t)$ is singular at $(0,t)$.

\para
The parametric form is widely used in computer aided geometric design and in geometric modeling. An algebraic surface defined by $F(x,y,z)=0$ may not have a rational parametrization. If an algebraic developable surface has a rational parametrization, we call it a  rational developable surface. In the following of this paper, we focus on finding the rational parametrization of a given rational developable surface.

\section{Implicitly rational developable surface}
We start with a theorem that determines  the developability of  an algebraic surface. The referenced discussion can be found in~\cite{burr50,Carmo, Spivak M79}.
\para

\begin{thm}\label{developability}
  Let $\S$ be an algebraic surface  defined implicitly by the polynomial $F(x,y,z).$
 $\S$ is a developable surface if and only if $K(x,y,z)=0$ on $\S$, where
\begin{equation}\label{kxyz}
  K(x,y,z)=\left|\begin{array}{cccc}
F_{xx}&F_{xy}&F_{xz}&F_{x}\\
  F_{yx}&F_{yy}&F_{yz}&F_{y}\\
  F_{zx}&F_{zy}&F_{zz}&F_{z}\\
  F_{x}&F_{y}&F_{z}&0
\end{array}\right|.
\end{equation}
\end{thm}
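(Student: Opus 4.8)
The plan is to connect the vanishing of the determinant $K(x,y,z)$ in \eqref{kxyz} to the classical differential-geometric characterization of developability, namely that the Gaussian curvature $\kappa$ vanishes identically on $\S$. Recall that for an implicit surface $F=0$ the Gaussian curvature at a regular point can be written as a quotient whose numerator is precisely (up to sign and a power of $|\nabla F|$) the bordered Hessian determinant appearing in \eqref{kxyz}: one has
\begin{equation*}
\kappa = \frac{-1}{|\nabla F|^4}\,
\left|\begin{array}{cccc}
F_{xx}&F_{xy}&F_{xz}&F_{x}\\
F_{yx}&F_{yy}&F_{yz}&F_{y}\\
F_{zx}&F_{zy}&F_{zz}&F_{z}\\
F_{x}&F_{y}&F_{z}&0
\end{array}\right|.
\end{equation*}
So the first step is to derive this formula. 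I would do this by working at a regular point $\p\in\S$, choosing coordinates so that $\nabla F(\p)$ points along the $z$-axis, solving $F=0$ locally for $z=g(x,y)$ by the implicit function theorem, and computing the second fundamental form of the graph of $g$ in terms of the partials of $g$; then one rewrites $g_x,g_y,g_{xx},\dots$ in terms of the partials of $F$ via implicit differentiation ($g_x=-F_x/F_z$, etc.), and a direct determinant manipulation (row/column operations that reintroduce the $F_x,F_y$ entries) shows the Hessian of $g$ bordered by $(g_x,g_y)$ equals, up to the factor $1/F_z^3$, the $4\times4$ determinant $K$. Since $\kappa = (g_{xx}g_{yy}-g_{xy}^2)/(1+g_x^2+g_y^2)^2$, and $1+g_x^2+g_y^2 = |\nabla F|^2/F_z^2$, the powers of $F_z$ cancel and the stated formula for $\kappa$ emerges, the $|\nabla F|^4$ in the denominator being manifestly nonzero at a regular point.

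The second step is to invoke the standard theorem from differential geometry (as in \cite{Carmo, Spivak M79}) that a ruled surface, and more generally a surface, is developable if and only if its Gaussian curvature is identically zero. Combined with the curvature formula, this gives: $\S$ is developable $\iff$ $\kappa\equiv 0$ on the regular locus of $\S$ $\iff$ $K(x,y,z)=0$ at every regular point of $\S$. The final step is to upgrade "$K$ vanishes on the regular locus'' to "$K=0$ on $\S$'' in the sense of the statement: the regular points of $\S$ are Zariski-dense in $\S$ (the singular locus is a proper closed subvariety, $\S$ being a hypersurface and hence having singular locus of dimension $\le 1$), and $K$ restricted to $\S$ is a polynomial (more precisely, $K$ is a polynomial in the jets of $F$, hence a polynomial on all of space), so vanishing on a dense subset of the irreducible surface $\S$ forces vanishing on all of $\S$ — equivalently $K\in\sqrt{(F)}$. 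Conversely if $K\equiv 0$ on $\S$ then in particular $\kappa\equiv 0$ on the regular part, so $\S$ is developable.

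The main obstacle I anticipate is the bookkeeping in the determinant identity of the first step: one must carefully track the implicit-differentiation substitutions and the sign/scaling factors so that the bordered Hessian of the local graph function assembles exactly into the symmetric $4\times4$ determinant $K$ with the $F_x,F_y,F_z$ border. A secondary subtlety is handling the regular-versus-singular distinction cleanly — the curvature formula is only valid where $\nabla F\neq 0$, so the equivalence is naturally stated on the regular locus first and then extended by density, and one should remark that this is exactly why the statement says "$K=0$ on $\S$'' rather than "$F\mid K$'' — the latter (radical membership) is what actually holds when $\S$ is reduced and irreducible. I would also note that the ruled-surface hypothesis is not strictly needed for the curvature computation; it only enters through the classical fact identifying flat surfaces with developable ones, which is where the trichotomy cylinder/cone/tangent surface ultimately comes from (Lemma~\ref{lem_develop}).
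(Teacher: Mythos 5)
Your proposal follows essentially the same route as the paper: the paper does not prove this theorem itself but simply cites the identity $\kappa(F)=K(x,y,z)/|\nabla F|^4$ (Goldman) together with the classical fact that a surface is developable if and only if its Gaussian curvature vanishes on it, which is exactly the two-step argument you outline. Your additional details (the local-graph derivation of the curvature formula, the Zariski-density argument at singular points, and the sign $-K/|\nabla F|^4$ — which is the correct convention but immaterial for the vanishing condition) only fill in what the paper delegates to the references \cite{burr50,Carmo,Spivak M79,Goldman05}.
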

  In some papers, it is also said that an algebraic surface is  developable if and only if its Gaussian curvature $\kappa(F)=0$  on $\S$, since we have the formula $\kappa(F)=K(x,y,z)/|\nabla F|^4$, where $\nabla$ means the gradient~\cite{Spivak M79}.  Goldman~\cite{Goldman05} gave a proof for this formula.\\

There are three types of developable surfaces.  We now discuss the rationality for each of them.

\para

\begin{lem}\label{cone}
Let $\S$ be a conical surface with  the apex $\P_0$. Let $\L$ be a plane not passing through $\P_0$, and let $\C$ be the intersection curve of $\S$ and $\L$.
  $\S$  has a rational proper parametrization   of the form  $ \P_0+ s \P_1(t)$ if and only if $\C$  is rational.
\end{lem}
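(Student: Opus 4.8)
The plan is to prove both implications by directly exploiting the cone structure. Since $\S$ is conical with apex $\P_0$, every point of $\S$ other than $\P_0$ lies on a unique ruling, a line through $\P_0$; conversely every such line meets the plane $\L$ in exactly one point (because $\L$ does not pass through $\P_0$, no ruling is parallel to $\L$ unless it lies in a parallel plane through $\P_0$ — and such a ruling would not meet $\S\cap\L=\C$ at all, so it simply is not among the rulings we parametrize). This sets up a natural bijection between the rulings of $\S$ and the points of $\C$: to a point $Q\in\C$ associate the line $\P_0 + s\,(Q-\P_0)$.

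For the direction ($\Leftarrow$): assume $\C$ is rational, with a rational proper parametrization $\q(t)\in\mathbb{R}(t)^3$. Set $\P_1(t) := \q(t)-\P_0$ and define $\P(s,t) = \P_0 + s\,\P_1(t)$. I would first check this is a parametrization of $\S$: for fixed $t$ the image traces the ruling through $Q=\q(t)$, and as $t$ varies over (almost all of) $\mathbb{R}$ these rulings sweep out a dense subset of $\S$ by the bijection above, so the Zariski closure of the image is $\S$. Then I would verify properness, i.e. that $\P$ is generically injective. The map $\q$ is proper by hypothesis, so $t$ is recovered generically from $\q(t)$; and given a generic point $\P(s,t)$ on $\S$ with $s\neq 0$, the ruling it lies on is uniquely determined, hence so is $Q=\q(t)$ (it is the intersection of that ruling with $\L$), hence $t$; and then $s$ is recovered as the ratio along the ruling. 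A clean way to package this is to note $\P$ factors as the composition of $(s,t)\mapsto(s,\q(t))$ with the birational map $(s,Q)\mapsto \P_0+s(Q-\P_0)$, both of which are birational onto their images, so $\P$ is birational, i.e. proper. (Strictly one should observe $\P(s,t)$ is not constant in $s$, which holds since $\P_1\not\equiv(0,0,0)$ as $\C\not\ni\P_0$.)

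For the direction ($\Rightarrow$): assume $\S$ has a rational proper parametrization $\P(s,t)=\P_0+s\,\P_1(t)$. The curve $\C=\S\cap\L$ is then the image of $\S$ under... more precisely, I would construct a rational parametrization of $\C$ directly. For each $t$, the ruling $\P_0+s\,\P_1(t)$ meets $\L$: writing $\L$ as $\{X : \n\cdot X = d\}$ with $\n\cdot\P_0\neq d$, the equation $\n\cdot(\P_0+s\,\P_1(t)) = d$ solves linearly for $s = (d-\n\cdot\P_0)/(\n\cdot\P_1(t))$, a rational function of $t$ (not identically $\infty$, since generic rulings are not parallel to $\L$). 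Substituting back gives $\q(t) := \P_0 + \frac{d-\n\cdot\P_0}{\n\cdot\P_1(t)}\,\P_1(t)\in\mathbb{R}(t)^3$, a rational parametrization whose image lies on $\C$; its Zariski closure is all of $\C$ because the rulings sweep $\S$. Hence $\C$ is rational. (Properness of $\q$ is not needed for the statement — rationality of $\C$ only requires existence of *some* rational parametrization — but it follows from properness of $\P$ by the same ruling-recovery argument if desired.)

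The main obstacle I expect is the bookkeeping of the degenerate rulings and the exceptional locus: rulings of $\S$ parallel to $\L$, the apex $\P_0$ itself (where $\P$ is not defined/injective), and values of $t$ where $\n\cdot\P_1(t)=0$ or where $\P_1(t)$ drops to zero. None of these is deep — each is a proper Zariski-closed (in fact finite, for $t$) subset and so does not affect rationality or birational equivalence — but stating the bijection between rulings and $\C$ cleanly enough that properness is transparent, while honestly excluding these loci, is the part that needs care. Everything else is the elementary observation that a cone is, birationally, the product of its apex-link curve with a line.
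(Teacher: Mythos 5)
Your proposal is correct and follows essentially the same route as the paper: for one direction you intersect the ruling family with $\L$ by solving linearly for $s$ and substituting back, and for the other you cone the proper parametrization of $\C$ over $\P_0$ (your $\P_0+s(\q(t)-\P_0)$ is the paper's $(1-s)\P_0+s\tilde\P(t)$). Your extra care with properness and the degenerate loci only fills in details the paper asserts without argument.
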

\begin{proof}
For the necessity, let $\S$ be a conical developable surface. $\S$  has a rational
proper  parametrization  $$ \P(s,t)=(p_{01},p_{02},p_{03})+s(p_{11}(t),p_{12}(t),p_{13}(t))\in \mathbb{R}(s,t)^3. $$ Let $\L$ be a plane not passing through the apex $\P_0$ of $\S$, and  we assume  its implicit equation is given as $L(x,y,z)=0$. Substituting $\P(s,t)$ into $L(x,y,z)=0$, one can solve $s=q(t)\in \mathbb{R}(t)$ because $(p_{11}(t),p_{12}(t),p_{13}(t))\neq (0,0,0)$ and $s$ is linear in the equation.
Then, the curve $\C$, given by the intersection of $\L$ and  $\S$,  has a proper rational parametrization defined as $\tilde \P(t)=\P(q(t),t)$, since $\P(s,t)$ is proper and $\L$ does not pass through the apex.

For the sufficiency,  according to the arguments, the apex $\P_0$ is not on $\C$. Suppose that $\C$ has a rational proper parametrization $\tilde \P(t)$. Thus,   $(1-s)\P_0+s\tilde \P(t)$ is a rational proper parametrization of  $\S$, since it defines a conical surface covered $\S$.
\end{proof}

\para

\noindent
For the cylindrical surface, we have the similar property.
\para
\begin{lem}\label{cylind}
Let $\S$ be a cylindrical surface with  the ruling direction $\P_1$. Let  $\L$ be a plane not parallel to  $\P_1$, and let $\C$ be the intersection curve of $\S$ and $\L$.
 $\S$  has a rational proper parametrization of the form  $ \P_0(t)+ s \P_1$ if and only if $\C$ is rational.
\end{lem}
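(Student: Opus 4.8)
The plan is to mirror the proof of Lemma~\ref{cone} essentially verbatim, replacing the pencil of lines through the apex by a pencil of lines in a fixed direction $\P_1$. For the necessity, suppose $\S$ has a proper parametrization of the form $\P(s,t)=\P_0(t)+s\P_1$ with $\P_1$ a constant nonzero vector. Let $\L$ be a plane not parallel to $\P_1$, with implicit equation $L(x,y,z)=0$; since $L$ is affine-linear and $\P_1$ is not parallel to $\L$, the quantity $\nabla L\cdot\P_1$ is a nonzero constant, so substituting $\P(s,t)$ into $L=0$ yields an equation that is linear in $s$ with nonvanishing coefficient of $s$. Hence one solves $s=q(t)\in\mathbb{R}(t)$, and $\tilde\P(t)=\P(q(t),t)$ is a rational parametrization of $\C=\S\cap\L$; properness of $\tilde\P$ follows exactly as in Lemma~\ref{cone}, because $\P(s,t)$ is proper and the substitution $s=q(t)$ is a birational section of the projection onto the directrix. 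Therefore $\C$ is a rational curve.

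For the sufficiency, suppose $\C=\S\cap\L$ has a rational proper parametrization $\tilde\P(t)$. Since the rulings of $\S$ are all parallel to $\P_1$ and $\L$ is not parallel to $\P_1$, every ruling meets $\L$ in exactly one point, so $\C$ is a section of the cylinder that hits each ruling once. Then $\Q(s,t)=\tilde\P(t)+s\,\P_1$ is a parametrization of a cylindrical surface whose rulings coincide with those of $\S$ and which passes through $\C$, hence it covers $\S$; moreover $\Q$ is proper because $t\mapsto\tilde\P(t)$ recovers the ruling (the $t$-fiber structure is faithfully tracked by $\C$ via properness of $\tilde\P$) and $s$ is then recovered linearly along each ruling. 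This exhibits $\S$ in the stated standard form $\P_0(t)+s\P_1$ with $\P_0(t)=\tilde\P(t)$.

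The only point requiring a little care — the analogue of ``$\L$ does not pass through the apex'' in Lemma~\ref{cone} — is the transversality condition ``$\L$ not parallel to $\P_1$,'' which is exactly what guarantees both that the coefficient of $s$ in $L(\P(s,t))=0$ does not vanish (necessity) and that the reconstructed cylinder meets $\L$ properly so that no ruling is lost or doubled (sufficiency). I expect the main obstacle, such as it is, to be the bookkeeping for properness: one must argue that composing the proper map $\P(s,t)$ with the rational section $s=q(t)$ preserves properness, and conversely that augmenting the proper curve parametrization $\tilde\P(t)$ by the linear parameter $s$ along the constant direction $\P_1$ yields a proper surface parametrization. Both follow from the degree-one (birational) correspondence between the directrix parameter $t$ and the rulings, just as in the conical case, so the proof is a direct adaptation with ``apex $\P_0$'' replaced by ``direction $\P_1$'' and ``plane through the apex'' replaced by ``plane parallel to $\P_1$.''
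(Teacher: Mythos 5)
Your proposal is correct and follows essentially the same route as the paper: substitute the standard-form parametrization into the plane equation, solve the linear equation for $s=q(t)$ to get a proper parametrization of $\C$ (necessity), and rebuild $\S$ as $\tilde\P(t)+s\P_1$ from a proper parametrization of $\C$ (sufficiency), with the non-parallelism of $\L$ and $\P_1$ playing exactly the role you describe. Your remark that the coefficient of $s$ is the nonzero constant $\nabla L\cdot\P_1$ is in fact a cleaner justification than the one given in the paper.
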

\begin{proof}
For the necessity, let $\S$ be a conical developable surface. $\S$  has a rational
proper  parametrization  $$ \P(s,t)=(p_{01}(t),p_{02}(t),p_{03}(t))+s(p_{11},p_{12},p_{13})\in \mathbb{R}(s,t)^3. $$Let $\L$ be a plane not parallel to $\P_1$, and we assume  its implicit equation is given by $L(x,y,z)=0$. Substituting $\P(s,t)$ into $L(x,y,z)=0$, one can solve $s=q(t)\in \mathbb{R}(t)$  because $(p_{11}(t),p_{12}(t),p_{13}(t))\neq (0,0,0)$ and $s$ is linear in the equation.
Then, the intersection curve $\C$  has a proper rational parametrization, $\tilde \P(t)=\P(q(t),t)$, since $\P(s,t)$ is proper and $\L$ is not parallel to the rulings.

 For the sufficiency, we know that $\C$ is not  a ruling since $\L$ is not parallel to the ruling direction. If $\C$  has a rational proper parametrization $\tilde \P(t)$, then $\P(s,t)=\tilde\P(t)+s \P_1$ is a rational proper parametrization of  $\S$, since $\P(s,t)$ defines a  cylindrical  surface covered~$\S$.
\end{proof}

A tangential developable surface is generated by the tangent lines of a space curve. The intersection of a tangent developable with the normal plane at a point $P$ of the curve generally has a cusp at that point. Thus the tangential developable surface of a space curve has a cuspidal edge along the curve (see~\cite{cleave80}), and the cuspidal edge is a singular curve of the tangential developable surface. In this paper, the space curve is assumed not to be a planar curve in ${\Bbb L}^3$. Since the tangential surface with a planar cuspidal edge is just a plane.
\para

\begin{lem}\label{tangent}
Let $\S$ be a tangential surface.
$\S$  has a rational proper parametrization if and only if  it has a singular curve having a rational proper parametrization $\P_0(t)$, and $ \P_0(t)+ s \P_0'(t)$ is a proper parametrization of $\S$.
\end{lem}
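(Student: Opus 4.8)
\emph{Proof plan.} The sufficiency is immediate: if $\P_0(t)+s\P_0'(t)$ is a proper parametrization of $\S$ with $\P_0\in\mathbb{R}(t)^3$, then this is by definition a rational proper parametrization of $\S$, so all the content is in the necessity, which I would prove in three steps. The first step reduces to a convenient parametrization and locates the cuspidal edge: since $\S$ is a rational ruled surface it admits a proper parametrization in the standard form~\eqref{Eq1}, say $\P(s,t)=\P_0(t)+s\P_1(t)$ with $\P_1\neq(0,0,0)$, and since $\S$ is tangential it is neither conical nor cylindrical, so by Lemma~\ref{lem_develop} neither $\P_0$ nor $\P_1$ is constant and, for generic $t$, $\P_1(t)$ and $\P_1'(t)$ are linearly independent (otherwise $\P_1$ would have a fixed direction and $\S$ would be a cylinder). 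The cuspidal edge $\C$ of $\S$, which by the discussion preceding the lemma is a singular curve of $\S$, meets every ruling $L_t=\{\P_0(t)+s\P_1(t)\}$ at its point of tangency, so $\C$ is traced by $\Q_0(t)=\P_0(t)+g(t)\P_1(t)$, where $g$ is the function characterized by $\Q_0'(t)\parallel\P_1(t)$.

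The second step shows $\C$ is rational. Differentiating, $\Q_0'(t)=\P_0'(t)+g'(t)\P_1(t)+g(t)\P_1'(t)$, so the requirement $\Q_0'(t)\times\P_1(t)=(0,0,0)$ becomes $\P_0'(t)\times\P_1(t)+g(t)\bigl(\P_1'(t)\times\P_1(t)\bigr)=(0,0,0)$. Now $\P_0'\times\P_1$ and $\P_1'\times\P_1$ are both orthogonal to $\P_1$, and the developability condition of Lemma~\ref{lem_develop} (equivalently $\det(\P_0'(t),\P_1(t),\P_1'(t))=0$) forces $\P_0'\times\P_1$ to be orthogonal to $\P_1'$ as well, while $\P_1'\times\P_1$ is trivially orthogonal to $\P_1'$; since the plane spanned by $\P_1$ and $\P_1'$ has a one-dimensional orthogonal complement for generic $t$, the vectors $\P_0'\times\P_1$ and $\P_1'\times\P_1$ are proportional. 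Hence $g$ is a well-defined rational function, $\Q_0(t)\in\mathbb{R}(t)^3$ is a rational parametrization of $\C$, and $\C$ is a rational curve; by L\"uroth's theorem $\C$ then has a \emph{proper} rational parametrization $\P_0(t)$, and after this reparametrization $\P_0'(t)$ is still a nonzero rational multiple of the ruling direction, i.e. $\P_0'(t)$ spans the ruling of $\S$ through $\P_0(t)$.

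The third step verifies that $\P_0(t)+s\P_0'(t)$ is a proper parametrization of $\S$. Since $\P_0'(t)$ spans the rulings, the map $(s,t)\mapsto\P_0(t)+s\P_0'(t)$ does parametrize $\S$, and to see it is birational I would take a generic point $Q$ of $\S$: it lies on a single ruling of $\S$ (because $\S$ is developable and not a plane), that ruling is the tangent line of $\C$ at a single point $P$ of $\C$ (a generic tangent line of the space curve $\C$ is a simple tangent — here the standing assumption that the cuspidal edge is not planar is used), and $P=\P_0(t)$ for a unique value of $t$ because $\P_0$ is proper, whence $s$ is uniquely recovered. So the map is generically injective, hence proper, which completes the proof.

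The crux is the second step: translating the purely geometric description of the cuspidal edge into the explicit offset $\P_0+g\,\P_1$ and exploiting the scalar developability identity to guarantee that $\P_0'\times\P_1$ and $\P_1'\times\P_1$ are proportional, so that $g$ — and with it the cuspidal edge — is rational. The properness argument of the third step also needs care, and it is exactly where the paper's assumption that the cuspidal edge is a genuine non-planar space curve enters.
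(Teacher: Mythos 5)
Your proof is correct, but it takes a genuinely different and much more detailed route than the paper's. The paper's necessity argument is essentially one line: it writes the proper parametrization of $\S$ directly in the tangential standard form $\P(s,t)=\P_0(t)+s\P_0'(t)$ (implicitly relying on the classification in Lemma~\ref{lem_develop} to justify that a rational proper parametrization of a tangential surface may be assumed in this shape), and then observes that $\P_0(t)$ traces the cuspidal edge, which is singular since the surface is singular along the parameter line corresponding to the edge; sufficiency is dispatched as ``the construction of the tangent developable surface.'' You instead start from an arbitrary proper standard-form parametrization $\P_0(t)+s\P_1(t)$, recover the edge of regression explicitly as $\P_0(t)+g(t)\P_1(t)$, prove $g$ is rational from the developability identity $\det(\P_0',\P_1,\P_1')=0$ (which makes $\P_0'\times\P_1$ and $\P_1'\times\P_1$ proportional in the one-dimensional orthogonal complement of the span of $\P_1,\P_1'$), then invoke L\"uroth to make the edge proper, and finally argue properness of $\P_0(t)+s\P_0'(t)$ by generic injectivity, using the standing assumption that the cuspidal edge is not planar. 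This buys a constructive justification of precisely the step the paper leaves implicit --- that rationality, properness and tangential type together allow one to take $\P_1=\P_0'$ with $\P_0$ proper --- at the cost of a longer argument. Two small points worth making explicit in your step 2: $\P_1'\times\P_1\not\equiv 0$ (so $g$ is well defined) because $\S$ is not cylindrical, and the curve $\P_0+g\P_1$ is non-constant because $\S$ is not conical, so it really is a space curve whose tangent developable fills out $\S$.
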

\begin{proof}
For the necessity,   let $\S$ be a tangential surface with a rational
proper  parametrization  $$\P(s,t)=(p_{01}(t),p_{02}(t),p_{03}(t))+s(p'_{01}(t),p'_{02}(t),p_{03}'(t))\in \mathbb{R}(s,t)^3. $$
Then  the rational curve defined by the parametrization $\P_0(s)$ is the cuspidal edge and it is a singular curve of  $\P(s,t)$ since $(s,0)$ is always singular.

The sufficiency is obtained by the construction of the tangent developable surface.
\end{proof}

\para

 We observe that if $F(x,y,z)=0$ defines a tangent developable surface, then the cuspidal edge is a singular curve. Therefore, it is included in the singular set  defined by the algebraic system $S=\{F=0, F_x=0, F_y=0, F_z=0\}$.

\para

In the following, we summarize Theorem~\ref{developability}, and Lemmas~\ref{cone}, \ref{cylind}, \ref{tangent}, and we get the following theorem.

\para

\begin{thm}\label{implicitdeve}Let $\S$  be an algebraic surface defined implicitly by the polynomial $F(x,y,z)$. $\S$  is a rational developable surface if and only if  the following statements hold:
    \begin{itemize}
       \item[1.] $K(x,y,z)=0$, for all points $(x,y,z)$ of $\S$.
       \item[2.] One of the following statements holds:
  \begin{itemize}
    \item[2.1.] $\S$ is a conical surface with apex $\P_0$, and there exists a planar curve $\C \subset \S$ not passing through $\P_0$ and having a proper rational parametrization $\tilde \P(t)$. Furthermore, $(1-s)\P_0+s\tilde \P(t)$ is a proper parametrization of $\S$.
  \item[2.2.] $\S$ is a cylindrical surface with ruling direction $\P_1$, and there exists a planar curve $\C \subset \S$ not parallel to  $\P_1$ and having a proper rational parametrization $\tilde \P(t)$. Furthermore, $\tilde \P(t)+s\P_1(t)$ is a proper parametrization of $\S$.
 \item[2.3.] $\S$ is a tangential surface, and there exists  a space singular curve $\C \subset \S$ having a rational proper parametrization $\P_0(t)$. Furthermore, $ \P_0(t)+ s \P_0'(t)$ is a proper parametrization of $\S$.
      \end{itemize}
  \end{itemize}
\end{thm}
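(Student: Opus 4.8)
The plan is to assemble the theorem essentially as a bookkeeping exercise over the results already established, taking care that each "if and only if" is genuinely reversible. First I would prove the forward direction: suppose $\S$ is a rational developable surface. Since $\S$ is developable, Theorem \ref{developability} gives that $K(x,y,z)=0$ at every point of $\S$, which is statement~1. Since $\S$ is developable, the classification recalled after Lemma \ref{lem_develop} (together with the standard fact that a developable surface is a cylinder, a cone, or a tangential surface) tells us that exactly one of the three geometric types occurs, and I would split into these three cases. In the conical case, $\S$ has apex $\P_0$; choosing any plane $\L$ not through $\P_0$, Lemma \ref{cone} (necessity direction) yields that the section $\C=\S\cap\L$ is rational, that it does not pass through $\P_0$, and that $(1-s)\P_0+s\tilde\P(t)$ is a proper parametrization of $\S$; this is statement~2.1. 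The cylindrical case is identical using Lemma \ref{cylind}, giving 2.2. The tangential case uses Lemma \ref{tangent}: $\S$ has a rational singular curve $\P_0(t)$ (the cuspidal edge), which lies in the singular locus $S=\{F=F_x=F_y=F_z=0\}$ as noted in the remark preceding the theorem, and $\P_0(t)+s\P_0'(t)$ parametrizes $\S$ properly; this is statement~2.3.

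For the converse I would assume statements~1 and~2 and show $\S$ is a rational developable surface. Statement~1 together with Theorem \ref{developability} already gives that $\S$ is developable. In whichever of the three subcases of statement~2 holds, the displayed parametrization — $(1-s)\P_0+s\tilde\P(t)$, or $\tilde\P(t)+s\P_1$, or $\P_0(t)+s\P_0'(t)$ — is by hypothesis a proper rational parametrization of $\S$, so $\S$ is rational. (Strictly, one does not even need clause~1 for the converse, since each of the three parametric forms falls under Lemma \ref{lem_develop} and is therefore automatically developable; but including clause~1 makes the statement symmetric and self-checking, and I would remark on this.) Hence $\S$ is a rational developable surface, completing the equivalence.

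The only genuinely delicate point — and the one I would be most careful about — is the claim, implicit in the phrasing of statement~2, that the geometric \emph{type} (cone vs.\ cylinder vs.\ tangential) is an intrinsic property of the surface $\S$, independent of the chosen parametrization, so that the trichotomy "one of 2.1, 2.2, 2.3 holds" is well posed. This follows from the classical structure theory of developable surfaces (each connected developable surface is, away from singularities, the envelope of its tangent planes and hence falls into exactly one of the three classes), but it should be invoked explicitly; otherwise the remaining content is a direct concatenation of Theorem~\ref{developability} and Lemmas~\ref{cone}, \ref{cylind}, \ref{tangent}. A secondary subtlety is ensuring that in the conical and cylindrical cases a suitable plane $\L$ (not through the apex, resp.\ not parallel to the ruling direction) always exists and that the resulting section $\C$ is a genuine curve rather than degenerate; this is immediate since $\S$ is a surface and only finitely many planes are excluded, but it is worth a one-line justification.
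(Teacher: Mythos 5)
Your proposal is correct and takes essentially the same route as the paper, which offers no separate argument but explicitly presents Theorem~\ref{implicitdeve} as a summary of Theorem~\ref{developability} combined with Lemmas~\ref{cone}, \ref{cylind} and \ref{tangent} — exactly the case-by-case assembly you describe for both directions. Your extra remarks (well-posedness of the cone/cylinder/tangential trichotomy, existence of a suitable cutting plane) go slightly beyond what the paper records but do not change the approach.
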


\para
\begin{center}
\sf  Parameterize the developable surfaces
\end{center}
\para

By Theorem~\ref{implicitdeve}, before parametrizing a rational developable surface,  we need to determine the types of the surface: conical, cylindrical or tangential surface.
The normal vector of  $\S$ at $(x,y,z)$ is $\N(x,y,z)=(F_x, F_y, F_z)$, where $F_{var}$  is the partial derivative of $F$ with respect to the variable $var$.  Then, the tangent surface of $\S$ at the point $(u,v,w)$ is given by the equation $T(x,y,z)=F_{u}(x-u)+F_{v}(y-v)+F_{w}(z-w)=0$.

\para

\begin{prop}Let $(F_x, F_y, F_z)$ and $T(x,y,z)=0$ be the normal vector and tangent surface of the developable surface $F(x,y,z)=0$, respectively. It holds that:
\begin{itemize}
 \item[1.] If $T(x,y,z)=0$ passes through a fixed point $\P_0$, $F(x,y,z)=0$ defines a conical surface with the apex $\P_0$.
 \item[2.] If there exists  $(0,0,0)\neq\P_1=(p_{11},p_{12},p_{13})\in \mathbb{R}^3 $ such that $p_{11}F_x+ p_{12}F_y+p_{13}F_z=0$, $F(x,y,z)=0$ defines a cylindrical surface with the ruling direction $\P_1$.
\item[3.] Otherwise, $F(x,y,z)=0$ defines a tangential surface, and its cuspidal edge is included in $\{F=0, F_x=0, F_y=0, F_z=0\}$.
\end{itemize}
\end{prop}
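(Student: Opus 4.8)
The plan is to work directly with the geometric characterization of developable surfaces recalled in Section~2: every developable surface is conical, cylindrical, or tangential, and the rulings are exactly the lines cut on $\S$ by the tangent plane. Since $\S$ is assumed developable, along each ruling the tangent plane is constant; equivalently, the family of tangent planes $T(x,y,z)=0$ depends on only one parameter. I would first make this precise: the rulings of $\S$ are the lines contained in $\S$, and each such ruling $\ell$ lies in a single tangent plane $T_\ell$. So $\S$ is swept by the one-parameter family of lines $\{\ell\}$, and the assignment $\ell \mapsto T_\ell$ is the standard description of a developable as the envelope of a one-parameter family of planes.

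For item~1, suppose all the tangent planes $T(x,y,z)=0$ pass through a common point $\P_0$. Then every ruling $\ell$, being contained in its tangent plane $T_\ell$, together with $\P_0$ spans a plane; but more is true: since $\ell \subset T_\ell$ and $\P_0 \in T_\ell$, and since $\S$ is the union of its rulings, I claim $\P_0 \in \overline{\S}$ and every ruling actually passes through $\P_0$. The cleanest way to see this is to use that the family of planes through a fixed point $\P_0$ has its envelope a cone with apex $\P_0$ (the dual statement: planes through a point dualize to points in a plane, whose "envelope curve" dualizes back to a cone). Concretely, along a ruling the tangent plane is fixed at $T_\ell$; as we move to an infinitesimally near ruling $\ell'$, its tangent plane $T_{\ell'}$ also contains $\P_0$, so the intersection line $T_\ell \cap T_{\ell'}$ — which is precisely $\ell$ in the limit (this is the defining property of the characteristic line of an envelope of planes) — passes through $\P_0$. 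Hence every ruling contains $\P_0$, which by the definition of conical surface in Section~2 means $\S$ is conical with apex $\P_0$.

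For item~2, suppose there is a fixed nonzero $\P_1=(p_{11},p_{12},p_{13})$ with $p_{11}F_x+p_{12}F_y+p_{13}F_z=0$ on $\S$, i.e. the normal $\N=(F_x,F_y,F_z)$ is everywhere orthogonal to $\P_1$. Then every tangent plane of $\S$ is parallel to the direction $\P_1$, so every tangent plane contains a line in direction $\P_1$. Running the same envelope/characteristic-line argument as in item~1, the characteristic line of the family $\{T_\ell\}$ in the limit — which is the ruling $\ell$ — is the intersection $T_\ell\cap T_{\ell'}$ of two nearby tangent planes, both parallel to $\P_1$, hence itself parallel to $\P_1$. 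Thus all rulings are parallel to $\P_1$, and by the definition in Section~2, $\S$ is cylindrical with ruling direction $\P_1$. Finally, item~3 follows by elimination: a developable surface is one of the three types by the classification recalled before Lemma~\ref{lem_develop}; cases~1 and~2 capture exactly the conical and cylindrical surfaces (the condition "$T$ passes through a fixed point" is necessary and sufficient for conical, and "$\N\perp\P_1$ for a fixed $\P_1$" is necessary and sufficient for cylindrical, by the arguments just given, both directions), so if neither holds, $\S$ must be tangential; and the final assertion, that the cuspidal edge lies in $\{F=0,F_x=0,F_y=0,F_z=0\}$, is exactly the observation recorded just after Lemma~\ref{tangent} that the cuspidal edge is a singular curve of $\S$.

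The main obstacle I expect is justifying rigorously that the ruling of a developable coincides with the characteristic (limit intersection) line of the one-parameter family of tangent planes, and handling the degenerate or multiply-covered situations — e.g. ensuring the "fixed point" or "fixed direction" is genuinely common to *all* tangent planes and not just generically, and that the limiting-intersection argument is valid at all but finitely many parameter values. For a self-contained treatment one can instead invoke Lemma~\ref{lem_develop}: parametrize $\S$ (as a ruled surface it has a standard form $\P_0(t)+s\P_1(t)$), translate each of the three conditions on $(F_x,F_y,F_z)$ and $T$ into the corresponding condition on $\P_0,\P_1$ from Lemma~\ref{lem_develop} ($\P_0$ constant, $\P_1$ constant, or $\P_1=\P_0'$), and conclude; this reduces everything to the already-cited classical lemma and avoids redoing envelope theory from scratch.
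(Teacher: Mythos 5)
Your proposal is correct and follows essentially the same route as the paper: both rest on the trichotomy of developable surfaces and on the fact that only conical surfaces have all tangent planes through a fixed point and only cylindrical surfaces have normals orthogonal to a fixed direction, with the tangential case obtained by elimination and the cuspidal-edge claim taken from the observation recorded after Lemma~\ref{tangent}. The paper's own proof merely asserts this exclusivity ``according to the definitions,'' whereas you additionally justify the sufficiency directions via the characteristic-line (envelope) argument, or alternatively by reduction to Lemma~\ref{lem_develop}; this supplies detail the paper omits but does not change the approach.
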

\begin{proof} The equation $F(x,y,z)=0$ defines a developable surface, and there are three different types of these surfaces. According to the definitions, the  conical surface is the only that has an apex such that any tangent surface $T(x,y,z)=0$ passes through it. The situation of the normal direction orthogonal with a constant vector can only happen with  the cylindrical surface. The remain developable surfaces are the tangential surfaces.
\end{proof}

\begin{alg}\label{alg1} {\sc Input}: An algebraic surface $\S$ defined implicitly by  $F(x,y,z)$.\\
  {\sc Output}: A proper rational parametrization $\P(s,t)$ of the rational developable surface $\S$ or a message for $\S$.
 \begin{itemize}
   \item[1.] Compute $K(x,y,z)$ of the form~\eqref{kxyz}. If it is zero on $\S$ go to Step 2. Otherwise, {\sf Return} ``$\S$ is not a developable surface."
   \item[2.] If the tangent surface defined by the equation $T(x,y,z)=0$ passes through a fixed point $\P_0$,  let $\L$ be a plane not passing through $\P_0$, and let $\C$ be the intersection curve of $\S$ and $\L$.
       \begin{itemize}
          \item[2.1.] If $\C$  has a rational proper parametrization $\tilde\P(t)$,  {\sf Return} $ (1-s)\P_0+ s \tilde \P(t)$ is a rational proper parametrization  of  the conical surface $\S$.
         \item[2.2.] Otherwise, {\sf Return} ``$\S$ is a conical surface but not rational."
       \end{itemize}

       \item[3.] If there exists  $(0,0,0)\neq\P_1=(p_{11},p_{12},p_{13})\in \mathbb{R}^3 $ such that $p_{11}F_x+ p_{12}F_y+p_{13}F_z=0$,  let $\L$ be a plane not parallel to $\P_1$, and let $\C$ be the intersection curve of $\S$ and~$\L$.
       \begin{itemize}
          \item[3.1.] If $\C$  has a rational proper parametrization $\tilde\P(t)$,   {\sf Return} $\tilde \P(t) + s\P_1$ is a rational proper parametrization  of  the cylindrical surface $\S$.
         \item[3.2.] Otherwise, {\sf Return} ``$\S$ is a cylindrical developable surface but not rational."
       \end{itemize}

\item[4.] Solve the algebraic system $S=\{F=F_x=F_y=F_z=0\}$ by applying for instance Wu's zero decomposition (see~\cite{Wu00}). Compute a rational proper parametrization $\tilde\P_i(t)$ of a curve $\C_i\in S$ applying for instance the resolvent method in~\cite{Chou1992}.
       \begin{itemize}
          \item[4.1.] If $\tilde\P(t)+s\tilde\P'(t)$ is a rational proper parametrization of $\S$,  {\sf Return} $\tilde\P(t)+s\tilde\P'(t)$ is a rational proper parametrization  of  the tangential surface~$\S$.
         \item[4.2.] If there not exists  any curve $\C\in S$ satisfying the condition of Step~4.1., {\sf Return} ``$\S$ is not a rational developable surface."
       \end{itemize}
 \end{itemize}
\end{alg}

\para

\begin{rem}\label{rem1}We here give some details for computation.
\begin{itemize}
  \item[a.] To find the fixed point $\P_0$ in Step~2, we need to solve  $S=\{T(x,y,z; u,v,w)=0, F(u,v,w)=0\}$ with respect to $\{x,y,z\}$. In order to simplify the computation, we can solve  the linear system $T(x,y,z; u_i,v_i,w_i)=0$ for some random selected points $(u_i,v_i,w_i)\in \S$, and check the solutions lying in  $S$. Observe that one also may use the arithmetic in the quotient field of rational functions ${\Bbb C}(\S)$, and to compute remainders with the polynomial $F(u,v,w)$.
      \item[b.]
 To find the ruling direction $\P_1$ in Step~3, we only need to consider the coefficient vectors of $F_x,F_y$ and $F_z$. It holds that for   a cylindrical surface, there are linearly dependent with associated vector $(p_{11},p_{12},p_{13})$.
\end{itemize}

\end{rem}

\para

\begin{center}
\sf  Examples of Algorithm~\ref{alg1}
\end{center}

\para

\begin{exam}\label{exam1}
  Let $\S$ be  the algebraic surface defined by the polynomial $$F(x,y,z)=4x^2 + 9y^2 - 4x - 6y - z^2 + 2.$$

In Step~1, we compute $K(x,y,z)=576F(x,y,z)$, which means that $\S$ is a developable surface. In Step~2, we get that the tangent plane at $(u,v,w)$ is  $$(8u-4)(x-u)+(18v-6)y-v)-2w(z-w)=0, \quad \mbox{where}\quad F(u,v,w)=0.$$

One can find that some random tangent planes pass through $\P_0=(x_0,y_0,z_0)=(1/2,1/3,0)$. We check that $\P_0$ is  a common point of $ \{T(x,y,z; u,v,w)=0, F(u,v,w)=0\}$. Then,  $F(x,y,z)=0$ defines a conical surface.\\
Let $L(x,y,z)=x-z=0$ be the plane $\L$ not passing through $\P_0$. Then, the intersection curve of $\S$ and $\L$ is defined  by the polynomial $f(x,y)= 3x^2+9y^2-4x-6y+2$ (we eliminate the variable $z$). In addition, it has a rational proper parametrization given as
$$\left(\frac{9+t^2}{27+t^2},\frac{t^2-6t+27}{3t^2+81}\right)\in \mathbb{R}(t)^2.$$
Thus, a rational proper parametrization of the intersection space curve is
$$\left(\frac{9+t^2}{27+t^2},\frac{t^2-6t+27}{3t^2+81},\frac{9+t^2}{27+t^2}\right)\in \mathbb{R}(t)^3.$$
Finally, a parametrization of $\S$  is given by
$$(1-s)(1/2,1/3,0)+s\left(\frac{9+t^2}{27+t^2},\frac{t^2-6t+27}{3t^2+81},\frac{9+t^2}{27+t^2}\right) \in \mathbb{R}(s,t)^3.$$
\end{exam}

\para

\begin{exam}\label{exam2}
 Let $\S$ be the algebraic surface  defined by the polynomial\vspace*{2mm}

  \noindent
 $F(x,y,z)={x}^{4}+4\,{x}^{3}y+6\,{x}^{2}{y}^{2}+4\,x{y}^{3}+{y}^{4}-10\,{x}^{3}-
27\,{x}^{2}y-3\,{x}^{2}z-18\,x{y}^{2}-18\,xyz+6\,x{z}^{2}-2\,{y}^{3}-
12\,{y}^{2}z+3\,y{z}^{2}+{z}^{3}+16\,{x}^{2}+8\,xy+24\,xz+16\,{y}^{2}-
24\,yz+24\,{z}^{2}+64\,x-32\,y+96\,z$.

\vspace*{2mm}

We have that $K(x,y,z)=0$ on $\S$ which means that $\S$ is a developable surface.
There is not a fixed point that belongs to all the tangent planes. Then,  we go to Step 3.

\vspace*{2mm}

The vector $\N(x,y,z)$ of $\S$ is orthogonal to $\P_1=(1,-1,-1)$, i.e., $F_x-F_y-F_z=0$, which means that $F(x,y,z)=0$ defines a cylindrical surface.\\
 Let $L(x,y,z)=x+y+z=0$ be the plane not parallel to  $\P_1$. The intersection curve $\C$ of $\S$ and the plane  has a rational proper parametrization\\

\noindent
$\tilde\P(t)=(-554752\,{t}^{2}+439520\,t+311296\,{t}^{3}-65536\,{t}^{4}-130606,
65536\,{t}^{4}-307200\,{t}^{3}+540160\,{t}^{2}-422240\,t+123804,14592
\,{t}^{2}-17280\,t-4096\,{t}^{3}+6802
)\in \mathbb{R}(t)^3$.

\vspace*{2mm}

 \noindent
Then a parametrization of the surface $\S$ is given by $\tilde\P(t)+s(1,-1,-1)\in \mathbb{R}(s,t)^3$.
\end{exam}

\para

\begin{exam}\label{exam3}
Let  $\S$  be the  algebraic surface defined by the polynomial\vspace*{2 mm}

\noindent
$F(x,y,z)=11+16\,z-12\,y-36\,x-4\,{z}^{2}-48\,yz+12\,{y}^{2}-36\,xz+36\,xy+42\,{
x}^{2}+48\,{y}^{2}z+72\,xyz-24\,x{y}^{2}+24\,{x}^{2}z-36\,{x}^{2}y-20
\,{x}^{3}-32\,z{y}^{3}-48\,{y}^{2}zx-24\,zy{x}^{2}+12\,{x}^{2}{y}^{2}-
4\,z{x}^{3}+12\,{x}^{3}y+3\,{x}^{4}
$.

\vspace*{2mm}

We follow Steps 1,2 and 3, and we get that  $\S$ is not a conical surface or cylindrical surface but a tangential surface. Thus, we go to Step 4, and we find its cuspidal edge by solving $S=\{F_x=0,F_y=0,F_z=0,F=0\}$.
For this purpose, we use \texttt{WSOLVE} (\url{http://www.mmrc.iss.ac.cn/~dwang/wsolve.htm}) which is a maple package to solve the characteristic set. We get
$$S=\left\{\begin{array}{l}
  \{2\,xy+2\,yz-2\,y-3\,z+xz,4\,{y}^{3}-2\,z+6\,yz+{z}^{2}\};\\
 \{ 3+{x}^{2},y-1,z+2\};\\
\{2\,x-3,4\,y-1,4\,z-1\};\\
\{x-3,y,z-2\};\\
\{x-1,y,z\}.
\end{array}\right\}$$
Only the first component has dimension one, and then it should be the cuspidal edge. Thus, the cuspidal edge is an algebraic curve defined by two surfaces as $$\{2\,xy+2\,yz-2\,y-3\,z+xz=0;4\,{y}^{3}-2\,z+6\,yz+{z}^{2}=0\}.$$
We can determine its rationality and parameterized it by applying for instance the resolvent method in~\cite{Chou1992}. Actually, in this example, one can find that the cylindrical surface $4{y}^{3}-2\,z+6\,yz+{z}^{2}=0$ can be regarded as a planar curve (in ${\Bbb L}^2$), and it just has a rational parametrization~as $$
  (y,z)=\left(\frac{(t+2)(t-4)}{4(t-1)^2},\frac{(t+3)^3}{4(t-1)^3}\right).
$$
We substitute $(y,z)$ into the another surface $2\,xy+2\,yz-2\,y-3\,z+xz=0$, and we solve the variable $x$. Then, we get a parametrization of the  cuspidal edge given by
$$
  \tilde \P(t)=\left(\frac{3(t^2+2)}{2(t-1)^2},\frac{(t+2)(t-4)}{4(t-1)^2},\frac{(t+3)^3}{4(t-1)^3}\right)\in \mathbb{R}(t)^3.
$$
Finally, a rational parameterization of $\S$  is given by $\P(s,t)=  \tilde \P(t)+s  \tilde \P'(t)\in \mathbb{R}(s,t)^3$.
\end{exam}

\para

\begin{center}
  {\sf Refine the parameterizations}
\end{center}

\para

Reparametrizing a rational surface such that it does not contain any base point is usually a cumbersome
task, even for a ruled surface. An affine \emph{base point} of a rational surface parameterized by $\P(s,t)$ is a
parameter pair $(s_0,t_0)$ such that the numerator  and denominator  of each component of $\P(s,t)$ at $(s_0, t_0)$ are  zero. The $\mu$-basis technique in~\cite{chen03} provides   a simple and elegant way to reparameterize a rational ruled surface such that it does not contain any non-generic base point. Furthermore, the directrices of the reparameterized surface have the lowest possible degree. Thus there are both geometrical and
computational advantages to be gained from such a reparametrization. Here we refine the parametrization using the $\mu$-basis method in~\cite{chen03}. The more efficient algorithm to compute $\mu$-basis can be found in~\cite{Deng05}.
Continue to the Example~\ref{exam3} of the tangential surface, one can get the refined rational reparametrization $\P(u,v)=
\P_0(u)+v\P_1(u)\in \mathbb{R}(u,v)^3$ where $$\P_0(u)=\left(-{\frac {5\,u-8}{2({u}^{2}-2\,u+1)}},{\frac {{u}^{2}+7\,u-11}{4(
{u}^{2}-2\,u+1)}},-{\frac {7({u}^{2}+4\,u+4)}{8({u}^{2}-2\,u
+1)}}
\right)\in \mathbb{R}(u)^3$$ and $$\P_1(u)=(2\,{u}^{2}+2\,u-4,-3\,u+3,3\,{u}^{2}/2+6\,u+6)\in \mathbb{R}(u)^3.$$

\section{Parametrically developable surfaces }
In this section, we consider a surface $\S$ defined by a parametrization (not necessarily proper),
\begin{equation}\label{para}
  \P(s,t)=(p_1(s,t), p_2(s,t), p_3(s,t))\in {\Bbb L}(s,t)^3.
\end{equation}
We give the necessary and sufficient condition so that $\S$ represents a developable surface. The following theorem can be deduced from Theorem~\ref{developability}, and the details also was proposed in~\cite{burr50}.

\para

\begin{thm}\label{developability2}
  A given parametric surface $\S$ defined by $\P(s,t)$ of the form~\eqref{para},
  is developable if and only if $K(s,t)=0$, where
\begin{equation}\label{kst}
  K(s,t)=\left|\begin{array}{cccc}
l_{s} & l_{t} & l\\
m_{s} & m_{t} & m\\
n_{s} & n_{t} & n
\end{array}\right|,
\end{equation}
and $l=p_{2s}p_{3t}-p_{3s}p_{2t}$, $m=p_{3s}p_{1t}-p_{1s}p_{3t}$ and $n=p_{1s}p_{2t}-p_{2s}p_{1t}$.
\end{thm}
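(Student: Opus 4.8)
The plan is to reduce the statement to the classical criterion recalled just after Theorem~\ref{developability}, that a surface is developable precisely when its Gaussian curvature $\kappa$ vanishes identically on it, and then to show by a direct computation with the normal vector that $K(s,t)$ differs from $\kappa(s,t)$ only by a factor that is nonzero wherever the parametrization is regular. Concretely I will establish
\[
K(s,t)=\bigl|\P_s\times\P_t\bigr|^{4}\,\kappa(s,t),
\]
which gives both implications at once: $|\P_s\times\P_t|^{2}=EG-F^{2}$ is nonzero at every regular parameter pair, and both sides are rational functions of $(s,t)$, so the left side vanishes identically iff $\kappa$ does, iff $\S$ is developable.

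First I would write $\N=\N(s,t)=\P_s\times\P_t=(l,m,n)$ for the (unnormalized) normal, so that the three columns of the matrix in~\eqref{kst} are $\N_s$, $\N_t$ and $\N$, whence $K(s,t)=\N\cdot(\N_s\times\N_t)$ by the scalar triple product. Then I would differentiate, $\N_s=\P_{ss}\times\P_t+\P_s\times\P_{st}$ and $\N_t=\P_{st}\times\P_t+\P_s\times\P_{tt}$, expand $\N_s\times\N_t$ into its four double cross products via the identity $(\mathbf a\times\mathbf b)\times(\mathbf c\times\mathbf d)=[\mathbf a\,\mathbf c\,\mathbf d]\,\mathbf b-[\mathbf b\,\mathbf c\,\mathbf d]\,\mathbf a$, and dot the result with $\N$. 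Since $\N\cdot\P_s=\N\cdot\P_t=0$, every term proportional to $\P_s$ or $\P_t$ drops out (as do those in which a vector is repeated inside a triple product), and what survives collapses to
\[
K(s,t)=(\P_{ss}\cdot\N)(\P_{tt}\cdot\N)-(\P_{st}\cdot\N)^{2}.
\]

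To finish, I would invoke the standard expression for the Gaussian curvature of a parametric surface, $\kappa=\dfrac{(\P_{ss}\cdot\n)(\P_{tt}\cdot\n)-(\P_{st}\cdot\n)^{2}}{EG-F^{2}}$, where $\n=\N/|\N|$ is the unit normal, $E,F,G$ are the coefficients of the first fundamental form, and $EG-F^{2}=|\N|^{2}=|\P_s\times\P_t|^{2}$ \cite{Carmo}; substituting $\P_{ss}\cdot\n=(\P_{ss}\cdot\N)/|\N|$ and likewise for $\P_{st}$ and $\P_{tt}$, and comparing with the previous display, gives $K(s,t)=|\N|^{4}\,\kappa(s,t)$, which is the claimed identity. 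On the nonempty Zariski-open set of parameters with $\P_s\times\P_t\neq 0$ — the only locus on which $K$, $\n$ and $\kappa$ are defined — this says $K\equiv 0$ iff $\kappa\equiv 0$ on $\S$, i.e. iff $\S$ is developable; the statement does not depend on the chosen parametrization because both conditions refer only to the image surface.

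The only genuinely delicate step is the cross-product expansion: one must keep track of the eight scalar triple products that appear and of their signs — notably $[\P_t\,\P_s\,\P_{tt}]=-\,\N\cdot\P_{tt}$ and $[\P_s\,\P_{st}\,\P_t]=-\,\N\cdot\P_{st}$ — after which the collapse to the two-term expression is automatic. An alternative I would mention but not carry out, because it is messier, is to deduce the identity straight from Theorem~\ref{developability}: choose an implicit equation $F=0$ for $\S$, use $F(\P(s,t))\equiv 0$ together with its first and second $(s,t)$-derivatives (the first shows $\nabla F$ is a scalar multiple of $\N$ along $\S$, the second relates the Hessian of $F$ to $\P_{ss},\P_{st},\P_{tt}$), and substitute into the Hessian determinant $K(x,y,z)$ of~\eqref{kxyz} to recover $K(s,t)$ up to a nonzero factor.
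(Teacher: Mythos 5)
Your argument is correct: the determinant in~\eqref{kst} is indeed $\det[\N_s\,\N_t\,\N]=\N\cdot(\N_s\times\N_t)$, the four-fold cross-product expansion with the orthogonality $\N\cdot\P_s=\N\cdot\P_t=0$ collapses it to $(\P_{ss}\cdot\N)(\P_{tt}\cdot\N)-(\P_{st}\cdot\N)^2=|\N|^2(LN-M^2)$ (I checked the signs, including $[\P_t\,\P_s\,\P_{tt}]=-\N\cdot\P_{tt}$ and $[\P_s\,\P_{st}\,\P_t]=-\N\cdot\P_{st}$), and with $EG-F^2=|\N|^2$ this gives exactly $K(s,t)=|\N|^4\kappa(s,t)$, so $K\equiv 0$ iff $\kappa\equiv 0$ on the regular locus, i.e.\ iff $\S$ is developable in the sense the paper uses after Theorem~\ref{developability}. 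Your route, however, is not the one the paper takes: the paper gives no computation at all and simply states that Theorem~\ref{developability2} ``can be deduced from Theorem~\ref{developability}'', citing Burr for the details --- that is, it transfers the implicit criterion $K(x,y,z)=0$ to the parametric setting, essentially the chain-rule argument you sketch only as an alternative (using $F(\P(s,t))\equiv 0$ to relate $\nabla F$ to $\N$ and the Hessian of $F$ to the second derivatives of $\P$). Your direct proof is the parametric analogue of Goldman's formula $\kappa(F)=K(x,y,z)/|\nabla F|^4$ quoted after Theorem~\ref{developability}: it is self-contained, makes the geometric meaning of $K(s,t)$ transparent (it is $|\N|^4$ times the Gaussian curvature, i.e.\ $|\N|^2$ times the numerator $LN-M^2$ of the second fundamental form determinant), and does not require knowing an implicit equation for $\S$; the paper's deduction avoids the cross-product bookkeeping but leaves the implicit-to-parametric transfer implicit and dependent on Theorem~\ref{developability}. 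Both are valid; yours is the more explicit and verifiable argument, and the only point worth making precise in a final write-up is that the identity $K=|\N|^4\kappa$ is asserted on the open set where $\P_s\times\P_t\neq(0,0,0)$, which is dense since $\P(s,t)$ parametrizes a surface, so the rational function $K$ vanishes identically iff $\kappa$ vanishes on a dense subset of $\S$, iff $\S$ is developable.
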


\para

Since we can tell whether $\S$, defined by $\P(s,t)$, is a developable surface,
we   compute a proper reparametrization in standard  form  for the affirmative case. For this purpose, let  $\N(s,t)=(n_1,n_2,n_3)=\P_s\times\P_t$ be the normal vector of the parametric surface $\S$, and we denote $\X=(x,y,z)$. Then,
the tangent plane of $\S$ is $T(x,y,z)=\N(s,t)\cdot(\X-\P(s,t))=0$.
According to the three types of developable surface (see Theorem~\ref{implicitdeve}), we have the following algorithm.

\para

\begin{alg}\label{alg2}{\sc Input}: An parametric surface $\S$ defined by the parametrization $\P(s,t)$.\\
{\sc Output}: A proper rational parametrization $\P(s,t)$ of  $\S$ or the message of ``$\S$ is not a developable surface".
 \begin{itemize}
   \item[1.] Compute $K(s,t)$ of the form~\eqref{kst}. If it is zero go to Step 2. Otherwise, {\sf Return} ``$\S$ is not a developable surface."
   \item[2.] If $T(x,y,z)=0$ passes through a fixed point $\P_0$,  let $\tilde\P(t)$ be a rational proper parametrization of the curve $\C$ which is the intersection curve of $\S$ and a plane not passing through $\P_0$.  Then, {\sf Return} $ (1-s)\P_0+ s \tilde \P(t)$ is a rational proper parametrization  of  the conical surface $\S$.

 \item[3.] If there exists  $(0,0,0)\neq\P_1=(p_{11},p_{12},p_{13})\in \mathbb{R}^3 $ such that $p_{11}n_1+ p_{12}n_2+p_{13}n_3=0$,  let $\tilde\P(t)$ be a rational proper parametrization of the curve  $\C$ which is the intersection of $\S$
           and a plane not parallel to $\P_1$. Then, {\sf Return} $\tilde \P(t) + s\P_1$ is a rational proper parametrization  of  the cylindrical surface $\S$.

\item[4.] Solve the algebraic system $S=\{\X-\P(s,t)=(0,0,0),\N(s,t)=(0,0,0)\}$, and find a rational proper parametrization $\tilde\P(t)$ of a curve $\C\in S$ (apply for instance the resolvent method in~\cite{Chou1992}). If $\tilde\P(t)+s\tilde\P'(t)$ parametrizes $\S$,   {\sf Return} $\tilde\P(t)+s\tilde\P'(t)$ is a rational proper parametrization of~$\S$.
 \end{itemize}
\end{alg}

\para

\begin{rem} We give some necessary remarks for Algorithm~\ref{alg2}.
  \begin{itemize}\label{rem2}
    \item[a.] The point $\P_0$ in Step~2  can be obtained from the coefficient set of the tangent plane $T(x,y,z; s,t)=0$ with respect to $\{s,t\}$. To compute $\P_1$ in Step~3, we just need to find the linearly dependent coefficient vector of  $n_1, n_2$ and $n_3$.
    \item[b.]   In Step~4, it is known that the cuspidal edge is included in the singular set $S$ of the surface $\S$. The cuspidal edge is a prime set and then, it can be separately  solved by the resolvent method in~\cite{Chou1992}.
    \item[c.] For the intersection curve in Step 2 or Step 3,  or the singular curve  in Step 4, we may get an improper parameterized curve if the given parametrization $\P(s,t)$ is improper. In this case, we  check out the improper case, and we properly reparameterize the curve by some methods such as~\cite{Perez-repara, sendra04, Sed86}.
    \item[d.] In Step~4, for a rational parametrization $\tilde\P(t)+s\tilde\P'(t)$, we should check whether it is a reparametrization of the given parametrization $\P(s,t)$. Here, we recommend to implicitize $\tilde\P(t)+s\tilde\P'(t)$. Observe that $\S$ is a ruled surface and then it has an efficient implicitization method (see for instance~\cite{shen10}). One can check the result by substituting the given parametrization into the implicit equation.
       \end{itemize}
  \end{rem}

\para

\para

\begin{center}
\sf  Examples of Algorithm~\ref{alg2}
\end{center}

\para

\begin{exam}Let $\S$ be a parametric surface defined by
$$\P(s,t)=\left({\frac {4\,{s}^{2}+t+1-2\,s+{t}^{2}+2\,ts}{1-2\,t-2\,s+{t}^{2}+2\,ts+
{s}^{2}}},{\frac {6\,t{s}^{2}+7\,{t}^{2}+6\,{s}^{3}+8\,ts-{s}^{2}-4\,t
+1-2\,s}{1-2\,t-2\,s+{t}^{2}+2\,ts+{s}^{2}}},\right.$$
$$\left. {\frac {{t}^{2}{s}^{2}+2
\,t{s}^{3}+6\,t{s}^{2}+{t}^{3}+2\,{t}^{2}s+5\,{t}^{2}+{s}^{4}+5\,{s}^{
3}+5\,ts}{1-2\,t-2\,s+{t}^{2}+2\,ts+{s}^{2}}}\right)\in \mathbb{R}(s,t)^3.$$

Following Step~1, $\S$ is a developable surface since $K(s,t)=0$. We can find this parametrization is improper (see the method in~\cite{sendra04}).

\vspace*{2mm}

In Step~2, solving the coefficient set of $T(x,y,z; s,t)$ with respect to $\{s,t\}$, we get a fixed point $\P_0=(1,1,0)$ of the tangent planes. Therefore, $\P(s,t)$ is a conical surface with the apex $\P_0$.

\vspace*{2mm}

 Let $L(x,y,z)=0$ be the equation defining a plane $\L$ not passing through $\P_0$. To simplify the computation,  one can find a plane $L(x,y,z)\in {\Bbb L}[var]$, where  $var\in \{x, y,z\}$. Then, $\{\X=\P(s,t), L(\P(s,t))=0\}$ defines a planar curve $\C\subset {\Bbb L}^3$.  In this example, we set $L(x,y,z)=z-1=0$ and then, the curve $\C$ is included in  $\{x=p_1(s,t),y=p_2(s,t),z=p_3(s,t),z-1=0\}$.  By the classic properties of the resultant (see for instance \cite{Cox1998} or \cite{Vander}), the implicit equation of the projected curve of $\C$ on the $(x:y)$ plane is a factor of $$\Res_s(\Res_t(\num(x-p_1),\num(p_3-1)),\Res_t(\num(y-p_2),\num(p_3-1))),$$ where $\num(\cdot )$ returns the numerator of a rational function, and $\Res_{var}$ returns the resultant of two polynomials with respect to $var$. We get that the implicit equation of the projected curve is $$283-338\,x+64\,{x}^{2}-120\,y+102\,xy+9\,{y}^{2}=0.$$ We find a rational parametrization, and we lift it to get the parametrization of $\C$. We have that
$$\tilde\P(t)=\left({\frac {-283+120\,t-9\,{t}^{2}}{-258+90\,t}},{\frac {283-507\,t+144\,
{t}^{2}}{-387+135\,t}},1\right)\in \mathbb{R}(t)^3.
$$

Finally, we obtain a rational proper reparametrization $ (1-s)\P_0+ s \tilde \P(t)\in \mathbb{R}(s,t)^3$ for the surface $\S$.\\

\noindent
The computation process is similar as above if the surface has a ruling direction.
\end{exam}

\para

\begin{exam}
Let $\S$ be a parametric surface defined by the parametrization\vspace*{2mm}

 \noindent
$\P(s,t)=((-1+2\,t+2\,s+3\,{t}^{2}{s}^{2}-2\,ts-t{s}^{2}+2\,t{s}^{3}+4\,{s}^{5}-{
t}^{6}+4\,{s}^{4}{t}^{2}-3\,{t}^{4}{s}^{2}-2\,{t}^{2}{s}^{3}-2\,{t}^{4
}s+4\,{s}^{4}t-2\,{t}^{3}{s}^{2}-2\,{t}^{3}s-{s}^{3}-{s}^{4}-2\,{t}^{5
}-{s}^{2}
)/ \left( {t}^{2}+s+t-1 \right) ^{2},
(-3\,{t}^{4}s-2\,{t}^{2}{s}^{2}+3\,{t}^{2}s+4\,t{s}^{3}-5\,{t}^{5}-t{s}
^{2}-{t}^{2}+3\,{t}^{3}+2\,{s}^{4}-{s}^{3}-3\,{s}^{4}t-6\,{t}^{3}s+2\,
{t}^{4}{s}^{2}-6\,{t}^{3}{s}^{3}+6\,{t}^{3}{s}^{2}+2\,{t}^{2}{s}^{3}+6
\,{t}^{5}s-{s}^{5}-2\,{s}^{6}-3\,{t}^{4}{s}^{4}+3\,{t}^{2}{s}^{6}+3\,{
s}^{7}+3\,{t}^{6}s-3\,{t}^{5}{s}^{2}-{t}^{6}{s}^{2}+3\,{s}^{6}t-3\,{s}
^{4}{t}^{3}+{t}^{8}-3\,{t}^{4}{s}^{3}+3\,{t}^{7}-3\,{t}^{2}{s}^{5}
)/ \left( {t}^{2}+s+t-1 \right) ^{3},
2\,{s}^{4} ( 3\,{t}^{2}{s}^{2}+3\,{s}^{3}+3\,t{s}^{2}-2\,{s}^{2}-
3\,{t}^{4}-3\,{t}^{2}s-3\,{t}^{3}+3\,{t}^{2} )
/\left( {t}^{2}+s+t-1 \right) ^{3})\in \mathbb{R}(s,t)^3$.\vspace*{2mm}

$\S$ is a developable surface because $K(s,t)=0$.  In addition, $\S$ is a rational tangential surface since it is not a conical or cylindrical surface.

\vspace*{2mm}

In Step~4, we solve the algebraic system $S=\{\X-\P(s,t)=(0,0,0),\N(s,t)=(0,0,0)\}$ using \textsf{Maple} package \texttt{WSOLVE}. We get a rational  space curve defined by the proper parametrization
$$\tilde\P_k(t)=(2\,{t}^{2}-3\,t,-{t}^{3}+3/2\,{t}^{2}+1/4\,t-3/8,-2\,{t}^{3}+3\,{t}^{
2}-3/2\,t+1/4
)\in \mathbb{R}(t)^3.$$

A proper rational tangential surface $\S_k$ can be constructed as $\P_k(s,t)=\tilde\P_k(t)+s\tilde\P_k'(t)\in \mathbb{R}(s,t)^3$. Using univariate resultant (see~\cite{shen10}),
we can get its implicit equation\vspace*{2mm}

 \noindent
$32+96\,y+96\,x-48\,z+96\,{x}^{2}+32\,{x}^{3}+48\,{y}^{2}-64\,{y}^{3}-
48\,{y}^{4}-48\,{y}^{2}{x}^{2}+96\,{x}^{2}y+192\,xy-96\,x{y}^{3}-20\,{
z}^{2}+32\,{z}^{3}+13\,{z}^{4}+48\,zy+12\,{z}^{3}x-12\,{z}^{2}{x}^{2}-
48\,{z}^{2}x-144\,{z}^{2}y+120\,{z}^{2}{y}^{2}-72\,{z}^{3}y-32\,z{y}^{
3}+192\,z{y}^{2}-48\,z{x}^{2}-72\,{z}^{2}xy+144\,{y}^{2}xz+96\,zyx+48
\,z{x}^{2}y-96\,zx
=0.$
\vspace*{2mm}

It holds that this surface cover the parametric surface $\S$ defined by $\P(s,t)$. Hence the $\P_k(s,t)$ is a rational proper reparametrization of the given one.
\end{exam}

\section{Conclusion}
The developability of an algebraic surface associates with the Gaussian curvature. There are three different types of developable surface, we discuss each type of them and then the problem is simplified. For a developable surface, we determine its rationality by discussing the three types of developable surfaces. We prove of the main theorem constructively. A rational proper parametrization of the rational developable surfaces is the proposed. For a rational parametrization (not necessarily proper), we determine its developability and find a proper reparametrization for the developable one.


\begin{thebibliography}{99}

  \bibitem{bajaj98}  Bajaj, C.,  Holt, R. Netravali, A. \emph{Rational parametrizations of nonsingular real cubic surfaces}. ACM
Transactions on Graphics, 1998. Vol. 17(1): 1-31.

\bibitem{beauville96} Beauville, A. \emph{Complex algebraic surfaces}, Cambridge University Press, 1996.

\bibitem{berry01} Berry, T. G., Patterson, R. R. \emph{Implicitization and
parametrization of nonsingular cubic surfaces}. Computer Aided
Geometric Design. Vol. 18: 723-738, 2001.

\bibitem{burr50}Burr, E. J. \emph{Conditions for a developable surface}. The Mathematical Gazette. Vol. 34(310): 300-302, 1950.

\bibitem{chen06} Chen, F.L., Shen, L.Y., Deng, J.S. \emph{Implicitization and
parametrization of quadractic and cubic surfaces by $\mu$-bases}.
Computing. Vol. 5: 131-142, 2006.

\bibitem{chen03} Chen, F.L. \emph{Reparametrization of a rational ruled
surface using the $\mu$-basis}. Computer Aided Geometric Design.  Vol. 20: 1-17, 2003.

\bibitem{cleave80}
Cleave, J. P. \emph{The form of the tangent-developable at points of zero torsion on space curves.}
Math. Proc. Cambridge Philos. Soc. Vol. 88(3):403-407, 1980.


\bibitem{Cox1998}  Cox, D.A.,
Little, J.,  O'Shea, D. {\it Using Algebraic Geometry. Graduate
Texts in Mathematics},  185. Springer--Verlag. 1998-



\bibitem{Deng05} Deng, J.S., Chen, F.L., Shen, L.Y. \emph{Computing $\mu$-bases of rational curves and surfaces using polynomial matrix factorization}. ISSAC, 2005, 132-139.

\bibitem{Carmo}do Carmo, M.P. \emph{Differential Geometry of Curves and Surfaces}. Englewood Cliffs: Prentice-Hall Inc., 1976.

\bibitem{handbook} Farin, G., Hoschek, J., Kim, M.-S. \emph{Handbook of computer aided geometric design}, Elsevier, 2002.

\bibitem{mf05} Floater, M. S., Hormann, K. \emph{Surface Parameterization: a tutorial and survey.}
In Advances in Multiresolution for Geometric Modelling, 157-186, 2005

\bibitem{Chou1992} Gao, X.S,.   Chou, S.C. \emph{On the Parameterization of Algebraic Curves}, Journal of Applicable Algebra in Engineering, Communication and Computing. Vol. 3: 27-38, 1992.

 \bibitem{Goldman05} Goldman, R. \emph{Curvature formulas for implicit curves and surfaces}. Computer Aided Geometric Design. Vol. 22: 632-658, 2005.

 \bibitem{liu06}Liu, Y., Pottmann, H., Wallner, J.,  Yang, Y.L., Wang, W.P.  \emph{ Geometric Modeling with Conical Meshes and Developable Surfaces}. ACM Transactions on Graphics, Vol. 25(3):1-9, 2006.

\bibitem{li13}Li, C.-Y., Wang, R.-H., Zhu, C.-G.  \emph{An approach for designing a developable surface through a given line of curvature.} Comput. Aided Des. Vol. 45(3): 621-627, 2013.

\bibitem{Perez-repara} P\'erez-D\'{\i}az, S.
{\it On the Problem of Proper Reparametrization for Rational Curves and Surfaces}. Computer Aided Geometric Design. Vol. 23(4): 307-323, 2006.

\bibitem{sonia13} P\'erez-D\'{\i}az, S., Shen, L.Y.
{\it Characterization of rational ruled surfaces}. Submitted.


  \bibitem{sendra04} P\'erez-D\'{\i}az, S., Sendra, J.R. \emph{Computation of the Degree of Rational Surface Parametrizations}. Journal of Pure and Applied Algebra. Vol 193(1-3): 99-121, 2004.

        \bibitem{potman99}Pottmann, H., Wallner,  J. \emph{Approximation algorithms for developable surfaces}, Computer Aided Geometric Design. Vol.16(6): 539-556, 1999.


  \bibitem{schicho97} Schicho, J. \emph{Rational rarametrization of surfaces}. Journal of  Symbolic Computation. Vol. 26(1): 1-29, 1998.

\bibitem{Sed86} Sederberg, T.W. {\it Improperly
Parametrized Rational Curves}. Computer Aided Geometric Design. Vol.  3: 67-75, 1986.

\bibitem{seds87} Sederberg, T.W., Snively, J. \emph{Parametrization of cubic algebraic surfaces}. In: Martin R. Mathematics of Surfaces II. New York: Clarendon Press, 299-319, 1987.

\bibitem{sed90b} Sederberg, T.W. \emph{Techniques for cubic algebraic surfaces}. IEEE Computer Graphics and Applications.  Vol. 10(4): 14-25, 1999.

\bibitem{shen10} Shen, L.Y., Yuan, C.M. \emph{Implicitization using univariate resultants}. Journal of Systems Science and Complexity. Vol. 23(4): 804-814, 2010.

\bibitem{Spivak M79} Spivak,  M. \emph{A Comprehensive introduction to differential geometry}. Vol 3. Berkeley: Publish or Perish Inc, 1979.

\bibitem{sun96}Sun, M., Fiume, E. \emph{A technique for constructing developable surfaces}. In Proceedings of the conference on Graphics interface '96 (GI'96), Wayne A. Davis (Ed.), 176-185, 1996.

\bibitem{turk02} Turk, G.,  O'brien, J. F. \emph{Modelling with implicit surfaces that interpolate}. ACM Transactions on Graphics. Vol. 21(4): 855-873, 2002.

\bibitem{Vander} van der Waerden, B.L.  {\it Algebra I and II.}
Springer-Verlag, New York. 1970.


\bibitem{handbook02} Wang, W.P. \emph{Modeling and processing with quadric
surfaces}. In: Handbook of Computer Aided Geometric Design,
Farin G, Hoschek J, Kim  M S, eds., Elsevier, 777-795, 2002.


\bibitem{Wu00} Wu, W.-T. \emph{Mathematics Mechanization}. Beijing: Science Press and Kluwer Academic Publishers, 2000.


\end{thebibliography}
\end{document}